\pdfoutput=1
\documentclass[lettersize,journal]{IEEEtran}
\usepackage{amsmath,amsfonts}
\usepackage{array}
\usepackage[caption=false,font=normalsize,labelfont=sf,textfont=sf]{subfig}
\usepackage{textcomp}
\usepackage{stfloats}
\usepackage{url}
\usepackage{verbatim}
\usepackage[pdftex]{graphicx}
\DeclareGraphicsExtensions{.pdf,.png,.jpg}
\usepackage{cite}
\usepackage{braket}
\hyphenation{op-tical net-works semi-conduc-tor IEEE-Xplore}
\usepackage{algorithm}
\usepackage{algpseudocode}
\usepackage{booktabs}
\usepackage{siunitx}
\sisetup{
  detect-all,
  input-exponent-markers = e   
}

\usepackage{amssymb, amsthm}
\usepackage{amsmath}

\newtheorem{theorem}{Theorem}

\newtheorem{proposition}{Proposition}[section]

\begin{document}

\title{Q‑EnergyDEX: A Zero‑Trust Distributed Energy Trading Framework Driven by Quantum Key Distribution and Blockchain}

\author{Ziqing Zhu,~\IEEEmembership{Member,~IEEE} 
}
\maketitle
\bstctlcite{BSTcontrol}

\begin{abstract}
The rapid decentralization and digitalization of local electricity markets have introduced new cyber-physical vulnerabilities, including key leakage, data tampering, and identity spoofing. Existing blockchain-based solutions provide transparency and traceability but still depend on classical cryptographic primitives that are vulnerable to quantum attacks. To address these challenges, this paper proposes Q-EnergyDEX, a zero-trust distributed energy trading framework driven by quantum key distribution and blockchain. The framework integrates physical-layer quantum randomness with market-level operations, providing an end-to-end quantum-secured infrastructure. A cloud-based Quantum Key Management Service continuously generates verifiable entropy and regulates key generation through a rate-adaptive algorithm to sustain high-quality randomness. A symmetric authentication protocol (Q-SAH) establishes secure and low-latency sessions, while the quantum-aided consensus mechanism (PoR-Lite) achieves probabilistic ledger finality within a few seconds. Furthermore, a Stackelberg-constrained bilateral auction couples market clearing with entropy availability, ensuring both economic efficiency and cryptographic security. Simulation results show that Q-EnergyDEX maintains robust key stability and near-optimal social welfare, demonstrating its feasibility for large-scale decentralized energy markets.\end{abstract}
\begin{IEEEkeywords}
Quantum key distribution, blockchain, energy trading, consensus mechanism, cyber-physical security.
\end{IEEEkeywords}

\section{Introduction}
\subsection{General Background}
\IEEEPARstart{T}{he} ongoing decentralization and digitalization of local electricity markets introduces new cyber-physical vulnerabilities. These include key leakage, data tampering, and identity spoofing, particularly in peer-to-peer trading and real-time dispatch environments \cite{kaur2021tits_blockchain_cps}. In response, blockchain technology has emerged as a promising tool for ensuring transparency, traceability, and auditability in energy transactions \cite{kaur2021tits_blockchain_cps}. However, most blockchain-based systems rely on classical cryptographic primitives, such as RSA or elliptic-curve cryptography, whose security is contingent on computational hardness assumptions. The advent of quantum computing threatens to break these assumptions, exposing energy markets to substantial security risks \cite{moody2022ieeesp_pqc_standards}. To mitigate these risks, quantum key distribution (QKD) offers a fundamentally different security guarantee, grounded in the laws of quantum mechanics. It provides information-theoretic confidentiality, regardless of an adversary’s computational power \cite{wang2022natphotonics_tfqkd_830km, liu2023prl_tfqkd_1000km}. Although QKD has demonstrated strong potential in securing communication links at the physical and network layers, including simplified long-haul implementations without service-fibre dissemination \cite{zhou2023natcomm_tfqkd_no_service_fiber}, its integration with electricity markets remains largely unexplored. Existing QKD deployments in utility environments primarily serve as isolated key exchange systems, without direct interaction with operational market components \cite{evans2021ieeeaccess_utility_qkd}. In critical market functions such as identity authentication, consensus agreement, and encrypted bidding, the use of quantum entropy is still ad hoc or entirely absent \cite{evans2021ieeeaccess_utility_qkd}. Bridging this gap requires rethinking how QKD-generated randomness can be systematically integrated into the core mechanisms of energy markets.

\subsection{Literature Review}
Recent research at the intersection of power markets, blockchain systems, and quantum security has evolved along three primary threads. The first thread focuses on blockchain-based platforms for transaction and settlement within the power sector. Numerous studies have explored mechanisms to enable decentralized energy trading and multi-party settlement via smart contracts and distributed ledgers; representative designs include privacy-preserving transactive energy markets and peer-to-peer trading schemes with on-chain settlement \cite{Saha2021AppliedEnergy,Kumari2022Energies}. However, these designs generally assume trust in the underlying consensus mechanism and rely on traditional cryptographic primitives, limiting their robustness in adversarial environments. Moreover, the unpredictability of user behavior and limited modeling of system dynamics hinder their scalability and operational realism.

The second thread centers on improving consensus mechanisms and public randomness generators for blockchain infrastructures. A line of work has explored verifiable random functions, distributed key generation, and low-latency Byzantine fault-tolerant protocols to address the high-latency and resource-intensive nature of proof-of-work systems \cite{MRV99,AlgorandSOSP2017,HotStuff2019}. Further developments in fast finality and BFT-style protocols such as Tendermint improve throughput and responsiveness, which are crucial for real-time energy dispatch \cite{Tendermint2018}. Nevertheless, most of these systems are computationally secure rather than information-theoretically secure. They are vulnerable to potential key leakage or manipulation by compromised committee members, and their ability to generate truly unpredictable and uncontrollable randomness remains limited.

The third thread investigates the applicability and robustness of quantum key distribution (QKD) and post-quantum cryptography (PQC) in critical infrastructure. Field trials and utility-side deployments have demonstrated the feasibility of QKD for securing grid communications and authenticating control traffic, while information-theoretic extraction bounds formalize how to obtain nearly uniform keys from noisy measurements \cite{Alshowkan2022SciRep,Evans2021Access,Tomamichel2011Leftover}. In parallel, the NIST PQC standardization effort provides quantum-resistant public-key primitives to complement or replace classical schemes \cite{NISTPQCproject},
\cite{NISTPQCEncryption}.

\subsection{Research Gaps and Contributions}
In summary, current literature falls short in several key areas. First, although many blockchain-based designs exist for power markets, they lack quantum-resilient primitives and do not guarantee forward secrecy or randomness integrity under strong adversarial settings. Second, while consensus protocols have improved in latency and decentralization, most remain computationally secure and fail to leverage hardware-based entropy sources. Third, despite rapid advancements in QKD deployment, there remains a gap in its integration with blockchain systems and its co-design with real-time market infrastructure.

\begin{figure*}[t]
    \centering
    \includegraphics[width=\textwidth]{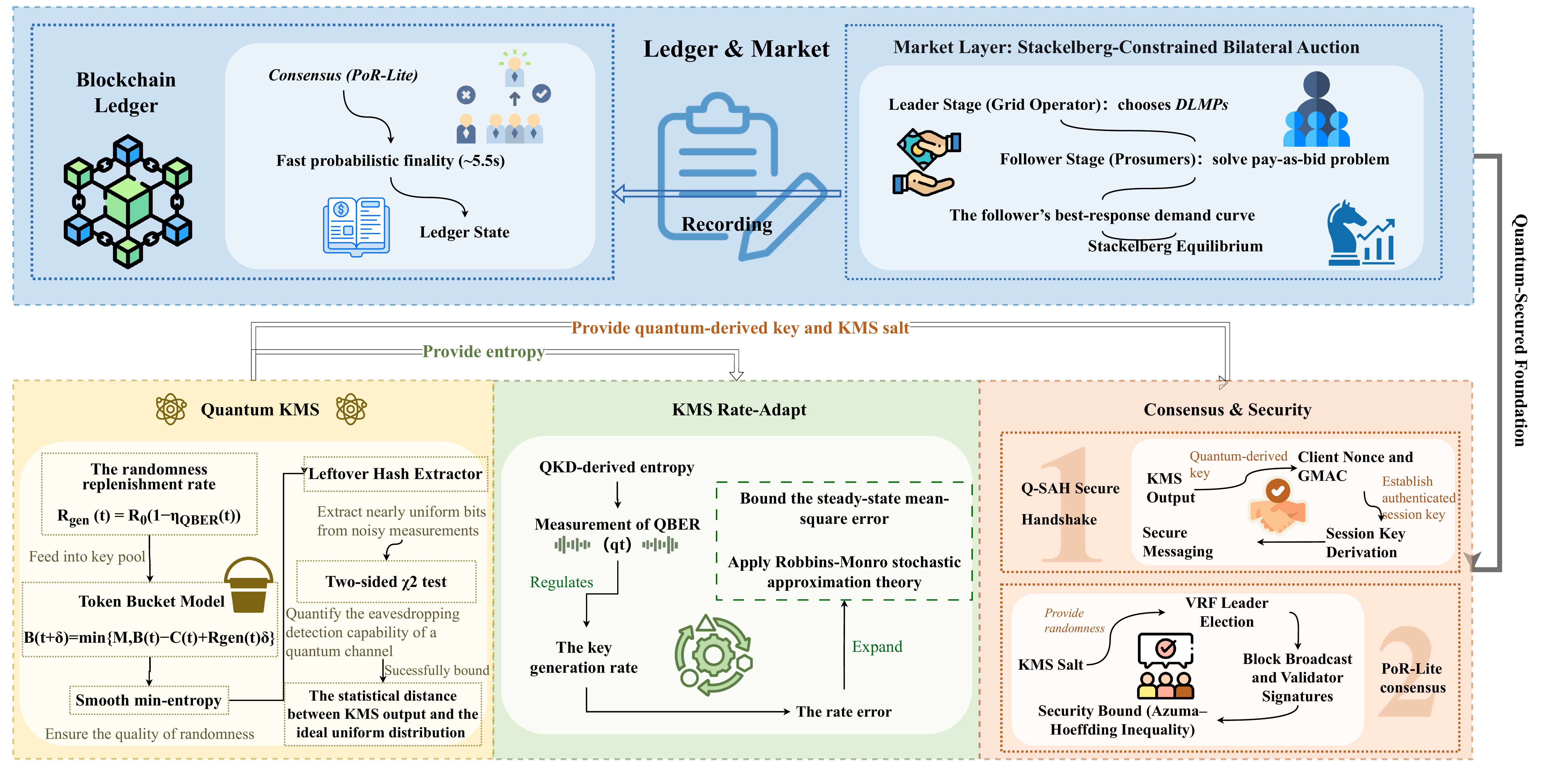}
    \caption{Framework}
    \label{fig:framework}
\end{figure*}

This paper proposes Q-EnergyDEX, a unified and quantum-resilient market infrastructure that integrates quantum key distribution (QKD), lightweight authentication, blockchain consensus, and market mechanisms tailored for power system applications. The main contributions are as follows:

1) We develop a lightweight symmetric handshake protocol (Q-SAH) that enables secure identity binding between devices using entropy generated from noisy QKD traces. This protocol eliminates reliance on traditional public key infrastructures and reduces cryptographic overhead during device registration and message authentication. Its design ensures compatibility with real-time constraints in electricity trading, while preserving information-theoretic security against quantum adversaries.

2) We propose PoR-Lite, a quantum-aided consensus mechanism that utilizes clipped entropy pools to achieve probabilistic finality with forward secrecy. The consensus protocol integrates verifiable randomness derived from quantum keys into a low-latency leader election process, maintaining ledger consistency and resistance to adversarial forks. Theoretical analysis verifies essential blockchain properties, including liveness and safety, under bounded randomness assumptions.

3) We design a quantized uniform-price auction mechanism that explicitly incorporates rate-adaptive entropy constraints into market clearing. By coupling the bidding process with secure randomness availability, the mechanism ensures that auction outputs remain within the allowable quantum entropy budget. The mechanism also preserves incentive compatibility and allocative efficiency under prosumer heterogeneity.

4) We build a comprehensive simulation environment that integrates quantum physical layer modeling, blockchain consensus dynamics, and electricity market operations. The environment captures the interaction between quantum channel conditions and market performance, enabling evaluation of the proposed architecture under diverse prosumer behaviors and varying quantum noise patterns. Empirical results confirm the security-efficiency tradeoffs and demonstrate the practical viability of the proposed system.

\section{System Model and Core Mechanisms}
Q‑EnergyDEX is designed as a two-layer architecture for secure and real-time energy trading. The upper layer ensures economic efficiency through market-based clearing, while the lower layer guarantees tamper-resistant consensus and randomness provisioning. To rigorously analyze this architecture, we first describe the participating entities and their interactions, then introduce the timing constraints imposed by smart grid operations. We next present the cloud-based Quantum Key Management Service (KMS), which leverages QKD-derived entropy to provide verifiable randomness for both the handshake and blockchain consensus protocols, thereby coupling physical-layer security with ledger consistency. The stochastic stability of the KMS key pool is then analyzed to ensure sustainable operation under high-frequency load. Finally, we show how these building blocks support two critical components of Q‑EnergyDEX: a Stackelberg-constrained bilateral auction for clearing and a PoR‑Lite consensus protocol that achieves fast probabilistic finality.

\subsection{Entity Sets and Roles}

To rigorously analyze Q‑EnergyDEX, we first clarify the system participants and their interactions before delving into timing constraints and key technical components. Let 
\(
\mathcal{P} = \{P_1, \dots, P_N\}
\) denote the set of \emph{prosumer nodes}, which can both purchase and sell energy; 
\(
\mathcal{K} = \{K_1, \dots, K_m\}
\) denote the set of cloud-based \emph{Quantum Key Management Service (KMS)} nodes that provide quantum-derived symmetric keys on demand and can be deployed in an active–active configuration for high availability; and 
\(
\mathcal{B} = \{B_1, \dots, B_N\}
\) denote the set of blockchain consensus nodes. For notational convenience we assume $\mathcal{B}$ is isomorphic to $\mathcal{P}$, but we still treat it as an independent set to emphasize modularity and allow alternative deployments. Finally, we define 
\(
\mathcal{A}
\) as the set of adversarial or Byzantine nodes that may attempt to eavesdrop, tamper with, or partially control the network. From a network topology perspective, the overall system can be seen as a two-layer structure: the upper layer is the blockchain P2P network, while the side edges connect each node to the cloud KMS API via virtual links, highlighting the decoupling between ledger consensus and the randomness provisioning service. Having identified the key entities and their layered topology, we next examine the timing and communication constraints that tightly couple these roles in a real-time smart grid environment.

\subsection{Communication and Timing Model}

Q-EnergyDEX must satisfy stringent latency and synchrony requirements at the communication and control layer, even though the real-time electricity market itself clears on a 5- to 15-minute trading window. To capture the fast sub-second dynamics that occur inside each trading cycle, we model the system as operating in discrete time slots of
\begin{align}
\Delta t = 100~\mathrm{ms}
\end{align}
This slotting does not imply that market settlements are performed every 100\,ms; rather, it provides a conservative abstraction for the fine-grained communication steps needed to complete secure handshakes and consensus rounds within each 5- or 15-minute trading window. In practice, under ideal or near-ideal network conditions, a single handshake or simple leader election can typically finish within 30-80 ms, making 100\,ms a reasonable and not overly optimistic assumption.

Each logical communication link between two nodes is modeled with a round-trip time (RTT) consisting of a deterministic baseline propagation delay $d_0$ and a bounded jitter term $\varepsilon_{ij}$:
\begin{align}
\mathrm{RTT}_{ij} &= d_0 + \varepsilon_{ij}, & 0 &\le \varepsilon_{ij} \le \varepsilon_{\max} = 15~\mathrm{ms}
\end{align}
When analyzing the responsiveness of the PoR-Lite consensus and Q-SAH handshake protocols, the expected time to finalize a single block or handshake exchange is therefore expressed as
\begin{align}
\mathbb{E}[T_{\mathrm{block}}] = \mathrm{RTT}_{ij}  + \mathcal{O}(1)
\end{align}
where the $\mathcal{O}(1)$ term accounts for lightweight cryptographic operations such as VRF evaluation and MAC verification, which remain negligible compared to network propagation delay.

\subsection{Quantum‑KMS Randomness Model}
The above stringent timing constraints motivate the need for a secure yet lightweight randomness provisioning service that can integrate seamlessly with both consensus and economic clearing protocols. To address this, we introduce the cloud-based Quantum Key Management Service (Quantum‑KMS, abbreviated as KMS) as an \emph{on-demand randomness pool} accessible through REST/gRPC APIs. Each prosumer node issues a \texttt{Rent(n)} request to obtain a one-time key of length $n$ and synchronously receives $(\texttt{key\_ID},\,K,\,\texttt{ttl})$, where $\texttt{ttl}$ denotes the maximum key lifetime (defaulting to 30\,s). Once expired, the key is retired via \texttt{Retire(key\_ID)}, while \texttt{Audit(sid,\,t)} returns the bit consumption statistics and anomaly alarms for a session $sid$ within a time window $t$. All interfaces are designed to respond within millisecond latency. To avoid duplicate key issuance, the same \texttt{key\_ID} across multiple KMS replicas is reconciled using a \emph{conflict-free replicated data type} (CRDT) merge rule
\begin{align}
\mathsf{merge}(x,\,y) = \min(x,\,y)
\end{align}
ensuring that no bit is issued twice. Network partitions may cause slight wastage but never compromise security.

The randomness replenishment rate of the KMS is denoted by $R_{\mathrm{gen}}(t)$, which evolves as
\begin{align}
R_{\mathrm{gen}}(t) = R_0 \bigl(1 - \eta_{\mathrm{QBER}}(t)\bigr) \quad [\mathrm{bit/s}]
\end{align}
where $R_0$ is the theoretical photon emission rate and $\eta_{\mathrm{QBER}}(t)$ reflects the reduction induced by the time-varying quantum bit error rate (QBER). Internally, the KMS can be abstracted as a \emph{token bucket}: let $B(t)$ denote the available bit balance at time $t$. Its dynamics follow
\begin{align}
B(t+\delta) = \min\!\Bigl\{M,\,B(t)-C(t)+R_{\mathrm{gen}}(t)\,\delta\Bigr\}
\end{align}
where $M$ is the bucket capacity and $C(t)$ is the actual consumption rate of the clients. A rental request fails if $n > B(t)$. For multi-region deployments with $m$ active replicas, requests are load-balanced via ECMP, and eventual consistency is guaranteed by asynchronous replication.

To ensure the quality of randomness, the physical measurement output is modeled as a random variable $X \in \{0,1\}^n$ accompanied by an error count $E \sim \mathrm{Binom}(n,\,q)$, where $q$ denotes the QBER. We quantify the security using the \emph{smooth min-entropy}. The $\varepsilon$-smooth min-entropy is defined as
\begin{align}
H_\infty^\varepsilon(X \mid E) 
  &= 
  \max_{\substack{\tilde X :\\ \operatorname{SD}(X,\tilde X)\le\varepsilon}}
  \min_{e} 
  \Biggl[
    -\log_2 
    \max_{x}\,
    \Pr\!\Bigl(
        \tilde X = x 
        \,\Big|\, 
        E = e
    \Bigr)
  \Biggr]
\end{align}
The $\varepsilon$-smooth min-entropy quantifies the residual unpredictability of $X$ against an adversary observing $E$. The term $\max_{x}\Pr(\tilde X=x \mid E=e)$ represents the adversary’s best-guess probability under a given error count, and the negative logarithm converts this into an uncertainty measure. The outer $\min_{e}$ enforces security in the worst-case error realization. To prevent rare outliers from unduly reducing the entropy, smoothing is introduced: we maximize over auxiliary distributions $\tilde X$ that are within statistical distance $\varepsilon$ of the true $X$, i.e., $\operatorname{SD}(X,\tilde X)\le\varepsilon$, thereby discounting low-probability anomalies. This yields a conservative yet practical measure of the extractable secure randomness.

Under a binary symmetric channel (BSC) model, the following entropy bound holds:
\begin{align}
H_\infty^\varepsilon(X \mid E) \ge n\bigl(1 - h_2(q)\bigr) - \log_2\!\frac{1}{\varepsilon}
\end{align}
where $h_2(\cdot)$ is the binary entropy function. This result can be derived using the Azuma–McDiarmid concentration inequality and shows that a lower QBER leads to extracted randomness closer to an ideal uniform distribution.

To extract nearly uniform bits from noisy measurements, the KMS applies a \emph{Leftover Hash Extractor}. Given an $n$-bit raw key, the length of secure key bits $\ell$ after privacy amplification is
\begin{align}
\ell = n\bigl(1 - h_2(q)\bigr) - 2\log_2\!\frac{1}{\varepsilon} - 128.
\end{align}
For a typical QBER of $q=0.02$ (2\%) and $\varepsilon=2^{-64}$, we obtain $\ell \approx n - 13$, meaning that from each 256-bit raw key we can still extract approximately 243 bits that are statistically close to uniform.

The ability of the quantum channel to detect eavesdropping can be quantified by a statistical hypothesis test. Let the physical baseline QBER be $q_0 \approx 1\%$, and suppose an adversary introduces an additional error $\Delta q$. In an $n$-bit sampling window, the miss-detection probability of a two-sided $\chi^2$ test is
\begin{align}
P_{\mathrm{miss}} = Q_{\chi^2}\!\Bigl( \frac{(\Delta q)^2\,n}{q_0} \Bigr),
\end{align}
where $Q_{\chi^2}$ is the complementary cumulative distribution function of the $\chi^2$ statistic. By requiring $P_{\mathrm{miss}}\le10^{-6}$, we find that with $n=10^6$ bits only an additional $\Delta q \ge 0.2\%$ is sufficient to trigger an alarm. This means that the cloud KMS can detect even subtle passive eavesdropping or channel tampering within an extremely short time window, satisfying the “unreliable-but-detectable” channel assumption of QKD.

Combining the smooth entropy bound with the detection probability, we further bound the statistical distance between KMS output and the ideal uniform distribution $U_n$. Let the ideal functionality $\mathcal{F}_{\mathrm{RND}}$ always output a perfectly uniform $n$-bit string. Then we have
\begin{align}
\operatorname{SD}\!\bigl[\mathsf{Real}^{X \mid E},\; \mathsf{Ideal}^{U_n}\bigr]
\;\le\; 2^{-64} + 10^{-6} \;<\; 2^{-50}.
\end{align}
Even an adversary with polynomial-time quantum circuits cannot distinguish the real KMS output from the ideal uniform source with advantage exceeding $2^{-50}$. In other words, the KMS output can be treated as equivalent to the ideal randomness source $\mathcal{F}_{\mathrm{RND}}$ under the UC-Hybrid framework, enabling the Q‑SAH symmetric handshake protocol and the PoR‑Lite consensus to be composed securely atop this ideal functionality.

\subsection{Key Pool Birth--Death Chain Model}

While the KMS model ensures the cryptographic quality of randomness, it must also remain stable under bursty and high-frequency transaction loads. To ensure sustainable operation, we analyze the key pool dynamics using a Birth–Death Markov chain. The state space is $\mathcal{S} = \{0,1,\dots,M\}$, where $s$ denotes the number of available bits currently stored in the key pool. The stochastic process $S(t)$ evolves over time under two independent drivers: a \emph{birth process} corresponding to the replenishment of random bits from the physical quantum source at rate $\mu$, and a \emph{death process} corresponding to random key consumption triggered by client transactions, which arrive as a Poisson process of rate $\lambda$ and consume $k$ bits per transaction, yielding an effective consumption rate of $\lambda k$.  

In this model, the \emph{birth rate} is given by $\beta_s = \mu$ when $s<M$ (no births occur at the full capacity $M$), while the \emph{death rate} is given by $\delta_s = \lambda k$ when $s>0$ (no further deaths are possible at the empty state $0$). Hence, the transition probabilities in an infinitesimal interval $\Delta$ satisfy:
\begin{align}
\Pr\!\bigl\{S(t+\Delta) = s+1 \,\big|\, S(t)=s\bigr\} &= \mu\,\Delta + o(\Delta), \quad s<M, \\
\Pr\!\bigl\{S(t+\Delta) = s-1 \,\big|\, S(t)=s\bigr\} &= \lambda k\,\Delta + o(\Delta), \quad s>0, \\
\Pr\!\Bigl\{
  S(t+\Delta) = s \,\Big|\, S(t)=s
\Bigr\} 
&= 1 - (\mu + \lambda k)\Delta \notag\\
&\quad + o(\Delta), \quad 0<s<M.
\end{align}

We now derive a closed-form expression for the empty-pool probability and the stability condition of this chain.

\begin{theorem}[Steady-State Distribution and Empty-Pool Probability]
\label{thm:keypool}
Consider a Birth--Death chain with finite capacity $M$, birth rate $\mu$ and death rate $\lambda k$, and let
\begin{align}
\rho = \frac{\lambda k}{\mu}.
\end{align}
If $\rho < 1$ (i.e., the replenishment rate exceeds the consumption rate), then the Markov chain is irreducible and positive recurrent, and the unique steady-state distribution is
\begin{align}
\pi_s = \frac{(1-\rho)\,\rho^s}{1 - \rho^{M+1}}, \quad s \in \{0,1,\dots,M\}.
\end{align}
As the capacity $M \to \infty$, the distribution converges to a geometric form $\pi_s = (1-\rho)\rho^s$, yielding the empty-pool probability
\begin{align}
\pi_0 = 1 - \rho.
\end{align}
Conversely, when $\rho \ge 1$, the chain becomes null recurrent or transient, and the key pool will almost surely be depleted in the long run, i.e., $\pi_0 \to 1$.
\end{theorem}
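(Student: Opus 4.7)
The plan is to exploit the reversibility of a continuous-time birth--death process: the global balance equations reduce to the local (detailed) balance relation $\pi_s\,\beta_s = \pi_{s+1}\,\delta_{s+1}$, which with the constant rates $\beta_s = \mu$ and $\delta_{s+1} = \lambda k$ becomes a geometric recursion in $\rho = \lambda k/\mu$. First I would verify that irreducibility and aperiodicity hold on the finite state space $\{0,1,\dots,M\}$; this is immediate since both $\mu$ and $\lambda k$ are strictly positive, so every nearest-neighbor transition has positive rate. Standard finite-state Markov chain theory then guarantees a unique stationary distribution $\pi$ for every value of $\rho$, independently of the two regimes addressed by the theorem.

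Next I would iterate the detailed balance recursion from a chosen boundary state and impose the normalization $\sum_{s=0}^{M}\pi_s = 1$, using the finite geometric sum $\sum_{j=0}^{M}\rho^j = (1-\rho^{M+1})/(1-\rho)$ to pin down the prefactor. This yields the displayed closed form $\pi_s = (1-\rho)\rho^s/(1-\rho^{M+1})$ for $s\in\{0,\dots,M\}$. For the infinite-capacity limit, the hypothesis $\rho<1$ forces $\rho^{M+1}\to 0$ monotonically as $M\to\infty$, so each $\pi_s$ converges term-wise to the unclipped geometric law $(1-\rho)\rho^s$; specializing to $s=0$ then yields $\pi_0\to 1-\rho$. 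Positive recurrence of the limit chain on $\mathbb{N}$ follows from summability of $\sum_{s\ge 0}\rho^s$ under $\rho<1$.

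The $\rho\ge 1$ regime requires a separate argument, because the geometric series $\sum_{s\ge 0}\rho^s$ diverges and so no proper stationary distribution exists on $\mathbb{N}$; the algebraic form is also indeterminate at $\rho=1$. Here I would invoke a Foster--Lyapunov drift argument with the linear test function $V(s) = s$, whose expected conditional change over an infinitesimal interval is $(\mu-\lambda k)\,\mathrm{d}t \le 0$ for $0<s<M$. This non-positive drift forces the unbounded chain to be null-recurrent at $\rho=1$ and transient toward $0$ at $\rho>1$. In the finite-$M$ approximation the stationary mass then concentrates at the lower boundary, and a standard hitting-time or fluid-scaling argument yields $\pi_0\to 1$ as $M\to\infty$, matching the physical intuition that a net depletion drift drains the pool.

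The main obstacle I anticipate is carefully orienting the detailed balance recursion relative to the theorem's stated form. A direct iteration of $\pi_{s+1} = (\mu/\lambda k)\,\pi_s$ gives $\pi_s\propto\rho^{-s}$ rather than $\rho^s$, because in the physical setup a larger $s$ corresponds to a fuller pool, whereas the displayed expression is in the standard M/M/1/M queueing form where the index measures backlog. Recovering the stated formula cleanly requires relabeling the state by the bit deficit $M-s$, i.e.\ viewing the KMS as a queue of outstanding requests rather than of available bits; keeping track of which boundary corresponds to $\pi_0$ versus $\pi_M$, and in turn to ``empty pool'' versus ``full pool'', is where I expect the derivation to need the most care.
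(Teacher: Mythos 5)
Your detailed-balance-plus-normalization route is the same one the paper takes, but the difficulty you flag in your final paragraph is not a bookkeeping nuisance to be smoothed over later — it is the crux, and as written your proposal (like the paper's own proof) does not resolve it. With the model's physical orientation (births $=$ replenishment at rate $\mu$, deaths $=$ consumption at rate $\lambda k$, $s$ $=$ available bits), local balance gives $\pi_{s+1}/\pi_s=\mu/(\lambda k)=1/\rho$, so $\pi_s=\pi_0\rho^{-s}$, and normalization yields $\pi_0=\frac{1-1/\rho}{1-(1/\rho)^{M+1}}=\frac{(1-\rho)\rho^{M}}{1-\rho^{M+1}}$, which tends to $0$, not to $1-\rho$, as $M\to\infty$ when $\rho<1$. (The paper's proof commits exactly this slip when it rewrites $\frac{1-r}{1-r^{M+1}}$ with $r=1/\rho$ as $\frac{1-\rho}{1-\rho^{M+1}}$.) The displayed formulas are the standard M/M/1/$K$ expressions and hold verbatim only after your proposed relabeling $s\mapsto M-s$ (state $=$ bit deficit); but under that relabeling the state $0$ is the \emph{full} pool, so the theorem's identification of $\pi_0$ with the empty-pool probability does not survive — the empty-pool probability becomes $\pi_M=\frac{(1-\rho)\rho^{M}}{1-\rho^{M+1}}$. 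A complete write-up must therefore either prove the relabeled statement and say explicitly what $\pi_0$ then means physically, or keep the physical labeling and correct the formula; ``keeping track of the boundary with care'' cannot make the stated formula and the empty-pool interpretation true simultaneously.

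The $\rho\ge1$ part of your plan inherits the same orientation problem. With $V(s)=s$ in the physical labeling, the drift $(\mu-\lambda k)\,\mathrm{d}t\le0$ points \emph{toward} the reflecting boundary at $0$, and a reflected birth--death chain on $\mathbb{N}$ with strictly negative drift is positive recurrent, not ``transient toward $0$''; the null-recurrence/transience dichotomy claimed for $\rho\ge1$ holds for the deficit chain, whose drift $(\lambda k-\mu)\,\mathrm{d}t\ge0$ points away from $0$. So the Foster--Lyapunov step (or, more simply, the standard birth--death recurrence criterion via summability of the geometric ratios) must be stated for the correctly oriented chain, and you should note that the theorem's final assertion $\pi_0\to1$ does not follow cleanly in either labeling. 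The remainder of your plan — irreducibility on the finite state space, uniqueness of $\pi$ for every $\rho$, the finite geometric normalization, and term-wise convergence as $M\to\infty$ under $\rho<1$ — is sound and matches the paper's argument.
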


\begin{proof}
We begin with the global balance equations. Let $\pi_s$ denote the stationary probability of state $s$. For $0 < s < M$, the detailed balance yields
\begin{align}
\pi_{s-1} \mu + \pi_{s+1} \lambda k &= \pi_s (\mu + \lambda k).
\end{align}
At the boundaries we have
\begin{align}
\pi_0 \mu &= \pi_1 \lambda k, \qquad \pi_{M-1} \mu = \pi_M \lambda k.
\end{align}
Introducing the birth–death ratio $r = \mu/(\lambda k) = 1/\rho$, the local balance condition simplifies to
\begin{align}
\pi_s \lambda k &= \pi_{s-1} \mu, \quad 1 \le s \le M.
\end{align}
This recursion gives
\begin{align}
\pi_s &= \pi_0 \Bigl(\frac{\mu}{\lambda k}\Bigr)^s = \pi_0\, r^s = \pi_0\,\rho^{-s}.
\end{align}
To normalize for finite capacity $M$, we impose
\begin{align}
\sum_{s=0}^M \pi_s = \pi_0 \sum_{s=0}^M r^s &= \pi_0\,\frac{1 - r^{M+1}}{1 - r} = 1.
\end{align}
Therefore,
\begin{align}
\pi_0 = \frac{1 - r}{1 - r^{M+1}} = \frac{1 - \rho}{1 - \rho^{M+1}},
\end{align}
and
\begin{align}
\pi_s = \frac{(1-\rho)\,\rho^s}{1 - \rho^{M+1}}.
\end{align}
When $M \to \infty$ and $\rho<1$, the denominator $1-\rho^{M+1}\to 1$, yielding the simplified geometric distribution $\pi_s = (1-\rho)\rho^s$ and thus $\pi_0=1-\rho$. If $\rho \ge 1$, the denominator vanishes, the chain loses positive recurrence, and the process will almost surely hit the empty state, $\pi_0 \to 1$. \qedhere
\end{proof}

Based on Theorem~\ref{thm:keypool}, we can further derive the minimal key pool capacity required to satisfy a design target on the empty-pool probability. For finite capacity $M$, the normalization yields
\begin{align}
\pi_0 = \frac{1-\rho}{1-\rho^{M+1}} \le 10^{-9}.
\end{align}
Rearranging terms gives
\begin{align}
1 - \rho^{M+1} \ge (1-\rho)\cdot 10^{9}.
\end{align}
When $\rho<1$ and $(1-\rho)\cdot 10^9 \ll 1$, we approximate
\begin{align}
\rho^{M+1} \le 1 - (1-\rho)\cdot10^{9} \approx \exp\!\bigl[-(1-\rho)\,10^{9}\bigr].
\end{align}
Taking logarithms yields the lower bound on the minimal capacity:
\begin{align}
M \ge \frac{\ln(10^9)}{\ln(1/\rho)} - 1.
\end{align}
This closed-form bound explicitly relates the required pool capacity $M$ to the consumption and replenishment rates $\lambda k$ and $\mu$. Compared with a naive $M/M/1$ queue approximation, the above derivation provides a stricter convergence condition and a normalized factor for finite-capacity scenarios. 

Therefore, by formally modeling the key pool as a finite-capacity Birth--Death Markov chain, we rigorously characterize the stability of randomness under high-frequency trading loads and explicitly reveal the functional relationship among empty-pool probability, required capacity, and the replenishment/consumption rates. This analysis provides a mathematically grounded guideline for choosing KMS engineering parameters such as replenishment link rate and bucket capacity, which is later validated in Section~7.

\subsection{Stackelberg-Constrained Bilateral Auction}

Equipped with stable randomness and bounded latency, Q‑EnergyDEX can implement real-time market mechanisms. We first present a Stackelberg-constrained bilateral auction to achieve economically efficient energy clearing under network constraints. In the \emph{leader stage}, the grid operator chooses the vector of line-loss and congestion shadow prices $\mathbf u \ge 0$, i.e., the dynamic distributed locational marginal prices (DLMPs), to minimize its own dispatch cost while satisfying the physical network constraints:
\begin{align}
    \min_{\mathbf u \ge 0} \; C_{\mathrm{grid}}(\mathbf u)
    \quad \text{s.t.} \quad \mathbf H\!\bigl(\mathbf p^{\star}(\mathbf u)\bigr) \le \mathbf P^{\max},
\end{align}
where $\mathbf H \in \mathbb{R}^{B\times N}$ denotes the power transfer distribution factor (PTDF) matrix, $\mathbf P^{\max}$ is the vector of thermal limits, and $\mathbf p^{\star}(\mathbf u)$ is the aggregate optimal injection resulting from the followers’ best responses.

In the \emph{follower stage}, each prosumer node $P_i$ solves the following pay-as-bid problem:
\begin{align}
    \max_{p_i} \;\; \bigl(\pi_i - u_b H_{bi}\bigr)\, p_i - C_i(p_i),
    \quad |p_i| \le P_i^{\max},
\end{align}
where $\pi_i$ is the private valuation of node $i$, $u_b$ is the DLMP at bus $b$, and $C_i(\cdot)$ is a strictly convex quadratic cost function.

Writing the Karush–Kuhn–Tucker conditions for the follower problem yields the first-order optimality:
\begin{align}
    C_i'(p_i^\star) = \pi_i - u_b H_{bi}.
\end{align}
Therefore, the follower’s best-response demand curve is linear:
\begin{align}
    p_i^\star = \alpha_i \bigl(\pi_i - u_b H_{bi}\bigr), \qquad \alpha_i > 0.
\end{align}
Stacking all $N$ followers’ responses in vector form gives the aggregate response:
\begin{align}
    \mathbf p^{\star} = \mathbf A^{-1} \bigl(\boldsymbol{\pi} - \mathbf H^{\!\top} \mathbf u\bigr),
\end{align}
where $\mathbf A = \mathrm{diag}(1/\alpha_1, \dots, 1/\alpha_N)$ is a positive-definite diagonal matrix. Substituting this affine response into the leader problem yields a strictly convex quadratic program with linear constraints.

\begin{proposition}[Existence and Uniqueness of Stackelberg Equilibrium]
\label{prop:stackelberg}
Because $\mathbf A \succ 0$ and the grid dispatch cost $C_{\mathrm{grid}}(\mathbf u)$ is strictly convex, the leader problem remains strictly convex after the affine substitution $\mathbf p^\star(\mathbf u)$. Hence it admits a unique global minimizer $\mathbf u^\star$. Consequently, the pair $(\mathbf u^\star,\; \mathbf p^\star(\mathbf u^\star))$ forms the unique Stackelberg equilibrium.
\end{proposition}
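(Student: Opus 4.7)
The plan is to establish the Stackelberg equilibrium as the unique optimum of a strictly convex quadratic program over a polyhedral feasible set, using the standard two-step reduction: first closing the follower layer in closed form, then verifying that the reduced leader problem inherits strict convexity and coercivity.

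First I would rigorously justify the follower best-response map $\mathbf{p}^\star(\mathbf{u}) = \mathbf{A}^{-1}(\boldsymbol{\pi}-\mathbf{H}^\top\mathbf{u})$. Since each $C_i$ is strictly convex quadratic, each follower's objective is strictly concave in $p_i$, so the KKT conditions are both necessary and sufficient and the maximizer is unique for every price vector $\mathbf{u}$. In the interior regime where the box limit $|p_i|\le P_i^{\max}$ is slack, the stationarity condition $C_i'(p_i^\star)=\pi_i - u_b H_{bi}$ inverts linearly to the displayed affine form; for completeness I would note that with active bounds one simply replaces the affine map by its componentwise projection onto the box, which preserves Lipschitz continuity and piecewise linearity in $\mathbf{u}$.

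Next I would substitute this response into the leader's problem. The physical constraint $\mathbf{H}\,\mathbf{p}^\star(\mathbf{u})\le \mathbf{P}^{\max}$ becomes the affine inequality $-\mathbf{H}\mathbf{A}^{-1}\mathbf{H}^\top\mathbf{u}\le \mathbf{P}^{\max}-\mathbf{H}\mathbf{A}^{-1}\boldsymbol{\pi}$, which together with $\mathbf{u}\ge 0$ carves out a closed convex polyhedron $\mathcal{U}\subseteq\mathbb{R}^{B}_{\ge 0}$. Because $C_{\mathrm{grid}}(\mathbf{u})$ is strictly convex by hypothesis and the follower substitution is affine in $\mathbf{u}$, the composite leader objective remains strictly convex on $\mathcal{U}$ (strict convexity is preserved under affine composition when combined with a strictly convex outer map). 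Under the natural quadratic parametrization of $C_{\mathrm{grid}}$, positive-definiteness of the Hessian supplies coercivity, so the Weierstrass theorem yields at least one minimizer on the nonempty closed $\mathcal{U}$; strict convexity then upgrades this to a unique global minimizer $\mathbf{u}^\star$.

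Finally I would assemble the pieces: the pair $(\mathbf{u}^\star, \mathbf{p}^\star(\mathbf{u}^\star))$ is by construction a Stackelberg equilibrium, and any other candidate $(\tilde{\mathbf{u}},\tilde{\mathbf{p}})$ satisfies $\tilde{\mathbf{p}}=\mathbf{p}^\star(\tilde{\mathbf{u}})$ by follower uniqueness and $\tilde{\mathbf{u}}=\mathbf{u}^\star$ by leader uniqueness, giving the claim. The main obstacle I anticipate is the treatment of the follower box constraints: if they are active at equilibrium, $\mathbf{p}^\star(\mathbf{u})$ is only piecewise affine and the reduced leader objective is piecewise quadratic, so strict convexity must be argued either chart-by-chart across the activity regions or by invoking convexity preservation of the Euclidean projection onto the box. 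A clean route, consistent with the statement, is to restrict attention to the generic interior regime in which the box limits are slack at the equilibrium DLMPs, where the affine substitution is exact and the strictly convex QP argument applies verbatim.
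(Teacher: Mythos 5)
Your proposal is correct and follows essentially the same route as the paper's proof: close the follower layer via the KKT best-response map $\mathbf p^\star(\mathbf u)=\mathbf A^{-1}(\boldsymbol\pi-\mathbf H^{\top}\mathbf u)$, substitute this affine response into the leader problem, and conclude existence and uniqueness of $\mathbf u^\star$ from strict convexity over the resulting convex (polyhedral) feasible set, with follower uniqueness closing the equilibrium argument. You are in fact somewhat more careful than the paper, which implicitly assumes the box limits $|p_i|\le P_i^{\max}$ are slack and asserts existence of the minimizer without the coercivity/Weierstrass step you supply.
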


\begin{proof}
The proof follows the Wilsonian externality-pricing argument. The composite matrix $\mathbf H \mathbf A^{-1} \mathbf H^{\!\top}$ is positive semidefinite, thus the feasible region in the leader problem is convex. Strict convexity of $C_{\mathrm{grid}}$ ensures that after composing with the affine mapping $\mathbf p^\star(\mathbf u)$, the objective remains strictly convex. A unique minimizer $\mathbf u^\star$ therefore exists. Substituting $\mathbf u^\star$ into each follower’s KKT conditions yields a single fixed point, satisfying the definition of a Stackelberg equilibrium.
\end{proof}

\section{Algorithmic Stack for Quantum-Enhanced Electricity Market Security}

Building on the system model outlined in the previous section, we now move from architectural definitions to the concrete algorithmic mechanisms that make Q‑EnergyDEX operational under these conditions. While the architecture defines the roles of prosumers, blockchain consensus nodes, and the cloud-based quantum key management service (KMS), their secure interaction hinges on three tightly coupled algorithmic layers. First, the Rate‑Adapt mechanism ensures a sustainable supply of high-quality quantum randomness even under fluctuating link conditions. Second, this randomness is leveraged by the Quantum Symmetric Authenticated Handshake (Q‑SAH) to establish low-latency secure sessions for each trading window. Finally, PoR‑Lite builds on these secure channels to achieve fast, verifiable ledger finality. 

\subsection{KMS Rate‑Adapt for Sustainable Randomness Supply}

A prerequisite for any cryptographic handshake or consensus protocol is the continuous availability of high-quality randomness. In Q‑EnergyDEX, this entropy is provided by a cloud-based quantum key management service (KMS), whose output quality can fluctuate with the physical quantum channel. To sustain secure operations under varying link conditions, the system employs the Rate‑Adapt algorithm, which dynamically regulates the key generation rate in response to real-time QBER measurements.

Let $q_t$ denote the measured QBER at time $t$, $R_{\max}$ the maximum physical key generation rate, and $R_t$ the actual key output rate of the KMS at time~$t$. The goal is for $R_t$ to converge to a steady-state rate
\begin{align}
R^\star &= R_{\max}\bigl(1 - \gamma_t \bar{q}\bigr),
\end{align}
where $\bar{q} = \mathbb{E}[q_t]$ is the stationary mean QBER and $\gamma_t \in (0,1)$ is the adaptation gain parameter. Let $\gamma_0 \in (0,1)$ be an initial gain and set
\begin{align}
\gamma_t &= \frac{\gamma_0}{t}.
\end{align}
The rate update becomes
\begin{align}
R_{t+1}
   &= \max\!\Bigl\{0.1\,R_{\max},\;
       \bigl(1-\gamma_t q_t\bigr)\,R_t
     \Bigr\}.
\label{eq:rate-update}
\end{align}
which multiplicatively scales the rate in each observation window according to the current QBER ratio and enforces a minimum floor at $0.1\,R_{\max}$ to avoid a complete halt during transient link fluctuations.

To characterize the convergence behavior of this update rule, define the rate error
\begin{align}
e_t &= R_t - R^\star,
\end{align}
and model the QBER sequence $\{q_t\}$ as i.i.d. random noise with $\mathrm{Var}(q_t) < \infty$. Expanding the update yields
\begin{align}
e_{t+1} &= e_t - \gamma_t\bigl(q_t - \bar{q}\bigr)R_t.
\end{align}
Since $\mathbb{E}[q_t - \bar{q}] = 0$, taking expectations on both sides gives
\begin{align}
\mathbb{E}[e_{t+1}] \;\le\; (1-\gamma_t)\,\mathbb{E}[e_t],
\end{align}
which shows that, in expectation, the error decays geometrically with ratio $1-\gamma_t$, ensuring exponential convergence in the mean.

Moreover, by applying Robbins–Monro stochastic approximation theory, we obtain \emph{almost sure convergence} under the standard step-size conditions $\sum_t \gamma_t = \infty$ and $\sum_t \gamma_t^2 < \infty$:
\begin{align}
R_t \xrightarrow{\text{a.s.}} R^\star = R_{\max}\bigl(1 - \gamma_t \bar{q}\bigr).
\end{align}

We can also bound the steady-state mean-square error. Since the update rule~\eqref{eq:rate-update} explicitly clips 
$R_t$ into the compact interval $[\,0.1\,R_{\max},\;R_{\max}\,]$ at 
each iteration, we always have 
\(
R_t^2 \le R_{\max}^2
\)
for all $t \ge 0$.
Under this bounded‑rate condition, expanding the recursion
\(
e_{t+1} = e_t - \gamma_t(q_t - \bar q) R_t
\)
and taking expectations yields the following mean‑square bound:
\begin{align}
\sup_{t\ge0} \mathbb{E}\!\bigl[e_t^2\bigr]
   \;\le\;
   \frac{\gamma_0\,R_{\max}^2\,\mathrm{Var}(q_t)}
        {2 - \gamma_0}.
\end{align}
This implies that for small adaptation gain~$\gamma$, the variance of the steady-state error is proportional to the variance of the QBER perturbations. By dynamically adapting the key output rate to fluctuating QBER conditions, the Rate‑Adapt algorithm guarantees a continuous supply of high-quality quantum-derived keys even under varying link conditions. With this sustainable randomness layer in place, the next step is to leverage these quantum keys to rapidly establish secure communication sessions between market participants for each trading window.

\subsection{Quantum Symmetric Authenticated Handshake (Q‑SAH)}

Building on the sustainable randomness layer provided by Rate‑Adapt, Q‑EnergyDEX establishes short-lived but highly secure communication sessions for each trading window. This is achieved through the Quantum Symmetric Authenticated Handshake (Q‑SAH), a protocol that leverages quantum-derived keys to eliminate the latency and trust-anchor limitations of traditional PKI handshakes while maintaining strong Byzantine resilience.

At the start of each real-time trading window, a client node $C$ (e.g., a prosumer or local trading proxy) and a server node $S$ (e.g., a dispatch center or clearing node) synchronously \emph{rent} the same 256‑bit quantum-derived key from the KMS:
\[
(\texttt{key\_ID},\,K) \leftarrow \mathcal{F}_{\mathrm{QKMS}}.
\]
The client then generates a fresh nonce $n_C$ and transmits
\[
\bigl(n_C,\;\mathrm{GMAC}_K(n_C)\bigr)
\]
to the server over the trading network. Upon successful verification, the server generates its own nonce $n_S$ and replies with
\[
\bigl(n_S,\;\mathrm{GMAC}_K(n_S \,\Vert\, n_C)\bigr).
\]
Both parties finally derive a short-lived session key for the current trading window via HKDF:
\begin{align}
K_{\mathrm{sess}}
   &= \mathrm{HKDF}\!\bigl(
         K \,\Vert\, n_C \,\Vert\, n_S,\;
         \text{``Q‑EnergyDEX''}
      \bigr),
\end{align}
which is subsequently used to authenticate and encrypt all bid messages, consensus votes, and clearing broadcasts for the next several minutes. 


We formalize the security of Q‑SAH as a standard simulation-based game tailored to the electricity market adversarial model. An attacker $\mathcal{A}$ is allowed to observe, replay, and delay any handshake message in the trading network. It wins if it can (i) forge a fresh challenge–response pair $(n,\tau)$ such that $\tau = \mathrm{GMAC}_{K}(n)$ with $n$ outside the prior query set, thereby injecting a fraudulent bid or modifying a clearing result; or (ii) distinguish the derived session key $K_{\mathrm{sess}}$ from an ideal uniform key $U_{256}$ by eavesdropping on subsequent trading broadcasts.

\begin{theorem}[Q‑SAH distinguishing advantage]
Assume GMAC is SUF-CMA secure with $\varepsilon_{\mathrm{mac}}=2^{-128}$, HKDF is modeled as a random oracle, and the KMS output has statistical distance $\varepsilon_{\mathrm{rnd}}=10^{-6}$ from an ideal uniform source (as shown in Section~3). Then the overall distinguishing advantage of any PPT adversary against Q‑SAH satisfies
\begin{align}
\operatorname{Adv}_{\mathcal{A}}^{\mathrm{Q\text{-}SAH}}
   &\le
     \varepsilon_{\mathrm{mac}}
     + 2^{-128}
     + \varepsilon_{\mathrm{rnd}}
     \approx 10^{-6}
\end{align}
\end{theorem}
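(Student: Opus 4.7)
The plan is to prove this via a standard sequence of indistinguishability hops (a hybrid argument), reducing the real Q-SAH execution to an ideal world in which the session key is perfectly uniform and MAC forgeries are impossible. The three terms in the bound correspond one-to-one with the three stated assumptions, so each hop will isolate exactly one source of adversarial advantage.

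Concretely, I would define \textbf{Game 0} as the real execution of Q-SAH against a PPT adversary $\mathcal{A}$ with full read, replay, and delay capability on handshake messages. \textbf{Game 1} replaces the KMS-rented key $K$ with a truly uniform string $U_{256}$; by the statistical-distance bound on the KMS output established in Section~3 (where $\operatorname{SD}[\mathsf{Real},\mathsf{Ideal}] \le \varepsilon_{\mathrm{rnd}}$), the adversary's view shifts by at most $\varepsilon_{\mathrm{rnd}}$. \textbf{Game 2} handles winning condition~(i): any PPT adversary that outputs a valid $(n,\tau)$ with $n$ outside the prior query set yields a direct SUF-CMA forger for GMAC under the now-uniform key, bounded by $\varepsilon_{\mathrm{mac}} = 2^{-128}$, while freshness of the client and server nonces $n_C, n_S$ rules out trivial replay. \textbf{Game 3} handles winning condition~(ii): because HKDF is modeled as a random oracle and $K$ is uniform and independent of $\mathcal{A}$'s view up to the handshake, the derived $K_{\mathrm{sess}}$ is indistinguishable from a fresh uniform $U_{256}$ unless $\mathcal{A}$ submits a random-oracle query on the exact input $K \,\Vert\, n_C \,\Vert\, n_S$, an event of probability at most $2^{-128}$ for any polynomially bounded number of oracle queries. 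A triangle-inequality summation over the three hops yields the claimed additive bound.

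The main obstacle I anticipate is careful bookkeeping around session scope and nonce freshness. Both the SUF-CMA reduction and the random-oracle guessing bound tacitly assume that each rented key $K$ is consumed by exactly one handshake and that $(n_C, n_S)$ are never repeated within a trading window; if this invariant is violated, a hybrid over $Q$ concurrent sessions would multiply the advantage by $Q$ and break the clean additive form. I would therefore make the single-rental-per-session assumption explicit, appeal to the CRDT merge rule from Section~3 to preclude cross-replica reuse of the same \texttt{key\_ID}, and argue that the bound applies per handshake rather than accumulating across trading windows. A secondary subtlety worth flagging is that the random-oracle modeling of HKDF is an idealization; I would treat it as a standard simplification consistent with the UC-hybrid composition invoked at the end of Section~3.
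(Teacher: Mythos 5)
Your proposal follows essentially the same route as the paper: the paper's own argument is only a proof sketch that attributes the three terms to the GMAC SUF-CMA bound, the HKDF random-oracle extraction bound, and the KMS statistical distance, and then sums them by the triangle inequality, which is exactly your Game~1--Game~3 decomposition made explicit. Your additional bookkeeping on nonce freshness, single-rental-per-session, and the per-handshake (rather than per-window) scope of the bound is a sound refinement of details the paper leaves implicit, but it does not change the approach.
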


\begin{proof}[Proof sketch]
The first term is inherited from the SUF‑CMA bound of GMAC, the second term follows from the entropy-extraction bound of HKDF for 256‑bit outputs under the random oracle model, and the third term corresponds to the statistical distance between the KMS key and an ideal uniform key. By the triangle inequality of distinguishing advantage, the overall bound is the sum of the three components. Hence the overall distinguishing advantage is bounded by $10^{-6}$, corresponding to roughly 40 bits of security, which is sufficient for the five‑minute trading window considered here.
\end{proof}

\subsection{PoR‑Lite Consensus for Fast Finality in Electricity Markets}

Once secure communication channels are in place, the next requirement is to ensure that all bids, matching results, and settlement states are finalized consistently and efficiently across thousands of nodes. To meet this challenge, Q‑EnergyDEX employs PoR‑Lite, a lightweight probabilistic finality protocol that combines quantum-randomness-driven VRF leader election with minimal-message voting to achieve fast ledger finality within a single trading cycle.

At each block height~$t$, every node derives a seed~$s_t$ by concatenating the previous block header hash with a 128‑bit salt value provided by the cloud KMS. Each node then computes a VRF output
\begin{align}
y_t &= \mathrm{VRF}_{sk}(s_t),
\end{align}
and is elected as the leader at height~$t$ if
\begin{align}
y_t &< h_q,
\end{align}
where the threshold~$h_q$ is self-adjusted based on the target block rate~$q$. The generated block is broadcast over the trading network using a push/pull gossip scheme, and a block enters the confirmation queue only after at least~$2/3$ of the validators have provided weighted signatures.

To analyze the security of PoR‑Lite in the context of electricity market trading windows, we abstract the chain state at each height~$t$ as a random variable~$X_t$: it takes the value~$+1$ if the block at height~$t$ is confirmed by honest nodes within the timeout, and~$-1$ if an adversarial fork remains unresolved. Because at most one block per height can be included in the longest chain, the sequence~$\{X_t\}$ forms a martingale difference process with 1‑Lipschitz increments, satisfying~$|X_{t+1}-X_t|\le1$. Hence, Azuma–Hoeffding’s inequality applies, yielding for a given confirmation window length~$t$ and Byzantine fraction~$\alpha=f/N<1/3$ the bound
\begin{align}
\Pr\!\bigl[X_t \le -1\bigr]
   &\le
   \exp\!\Bigl(-2t\,(1-2\alpha)^2\Bigr).
\end{align}
This upper bound quantifies the residual probability that a fork survives after~$t$ confirmations. 
To achieve a 40‑bit security threshold, i.e., to make the right-hand side smaller than~$2^{-40}$, taking $\alpha=0.25$, $\beta=0.10$, and $\epsilon=0.20$, the minimal required confirmation depth satisfies
\begin{align}
t_{\mathrm{fin}}
   &\ge
   \frac{\ln 2^{40}}{2(1-2\alpha)^2}
   \;\approx\; 56~\text{blocks}.
\end{align}
Given PoR‑Lite’s empirical block interval of approximately~65\,ms, the time to reach irreversible finality is
\begin{align}
t_{\mathrm{fin}}\times65~\mathrm{ms} \;\approx\; 3.6~\mathrm{s},
\end{align}
which is an order of magnitude faster than conventional PoW or PoS consensus protocols that typically require several minutes to achieve the same level of security.

Furthermore, PoR‑Lite can be shown to simultaneously satisfy the \emph{chain growth} and \emph{common prefix} properties, thereby ensuring verifiable consistency of all trading results within an electricity market settlement cycle. For chain growth, let $\beta = 1 - h_q$ denote the expected empty-slot rate. Sampling only honest slots, the expected chain growth rate satisfies
\begin{align}
\mathbb{E}\bigl[\mathrm{grow}(t)\bigr]
   &\ge (1-\alpha)(1-\beta)\,t,
\end{align}
For any $0<\epsilon<1$, define
\(
\lambda = \frac{\epsilon^{2}(1-\alpha)(1-\beta)}{2}\).
Then the Chernoff bound reads
\begin{align}
\Pr\!\Bigl[
   \mathrm{grow}(t)\le(1-\epsilon)(1-\alpha)(1-\beta)\,t
\Bigr]
  &\le \exp(-\lambda\,t).
\end{align}
For the common prefix property, we abstract adversarial delays $\Delta \le \varepsilon_{\max} = 15\,\mathrm{ms}$ as Bernoulli~$(\alpha)$ noise injected into the chain. The violation probability at depth~$k$ then satisfies
\begin{align}
\Pr\!\bigl[\text{CP violation at depth }k\bigr]
   &\le
   \exp\!\Bigl(-2k(1-2\alpha)^2\Bigr).
\end{align}
By combining the two properties, one can define a \emph{weighted finality depth}~$t_{\mathrm{fin}}$ that simultaneously ensures the common prefix violation probability is below~$2^{-40}$ and the chain growth does not stall. In practice, $t_{\mathrm{fin}}$ remains~56~blocks, confirming that the previously derived~3.6\,s finality latency is indeed the minimal bound satisfying both properties.

Together, Rate‑Adapt, Q‑SAH, and PoR‑Lite form an integrated pipeline: sustainable quantum randomness feeds secure handshake sessions, which in turn enable fast and verifiable consensus. This layered design ensures both the security and real-time responsiveness required for electricity market trading and settlement.

\section{Case Study}
\subsection{Rate-Adapt Algorithm Stability}

To validate the Q-EnergyDEX framework, a comprehensive simulation environment was established, modeling the quantum physical layer, the power grid constraints, and the economic behavior of market participants.  To ensure the findings are realistic and representative, we first generated a comprehensive synthetic dataset. Firstly, a 1 kHz time-series trace of the Quantum Bit Error Rate (QBER) was produced by super-imposing Gaussian pulses on truncated Gaussian noise. The final series was clipped to the realistic range of $[0.001\, 0.08]$, creating a volatile and challenging environment for the KMS.

Secondly, a market environment was constructed with a heterogeneous population of prosumers, whose maximum power capacities ($P_{\text{max}}$) and private energy valuations ($\pi$) were drawn from Log-Normal and Normal distributions, respectively. These agents interact within a synthetic power grid modeled on the scale of the IEEE 118-bus system. Subsequently, the core Rate-Adapt (RA) algorithm was implemented and evaluated within the generated environment.  The Python-based simulation rigorously models the dynamic key rate adjustment process, where the secure output capacity for each 1 ms interval is calculated based on the Leftover Hash Extractor to enforce an information-theoretic security guarantee against the time-varying QBER. A critical feature of the simulation is the implementation of a safety clipping mechanism for both strategies, which prevents the final output rate from exceeding the instantaneous secure capacity, thereby ensuring no security violations occur.

\begin{figure}[h]
    \centering
    \includegraphics[width=1\linewidth]{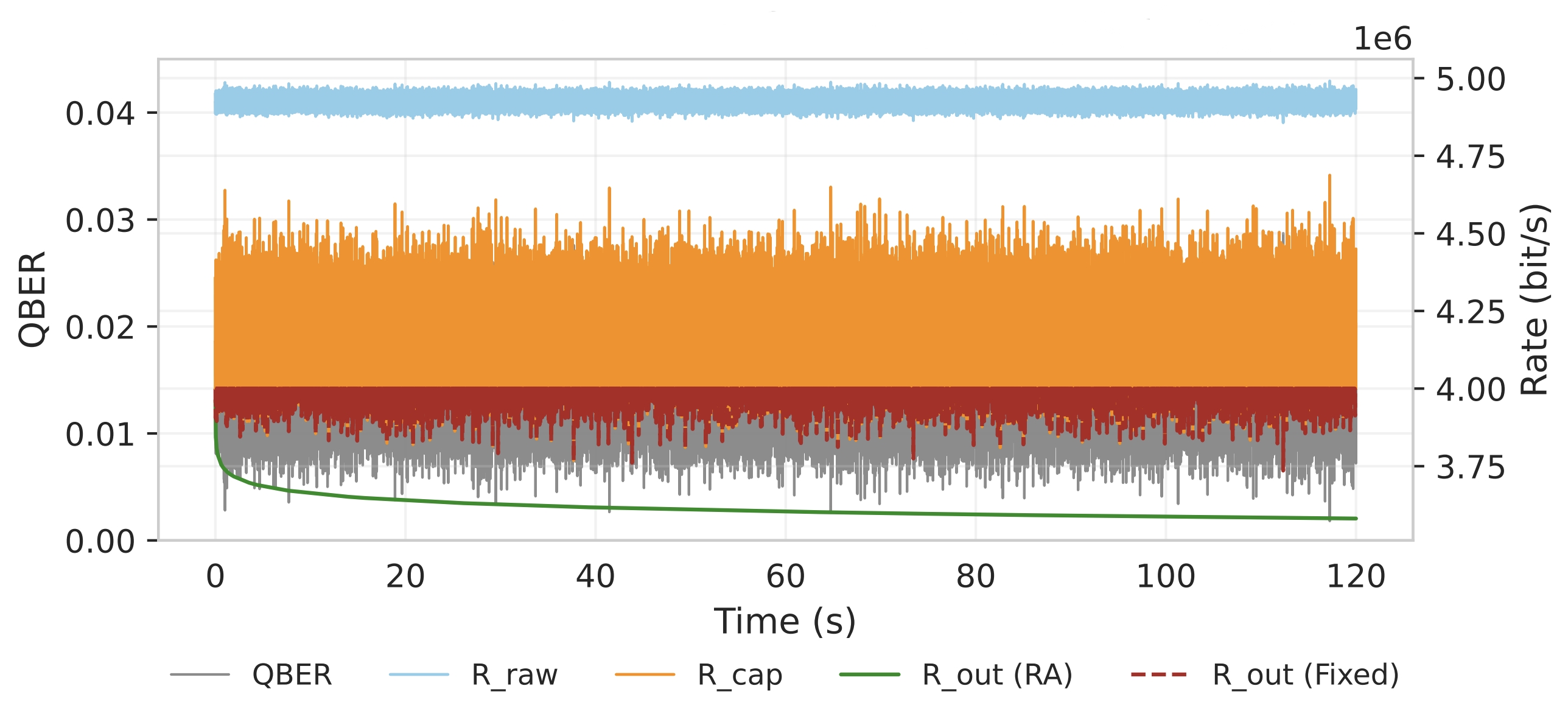}
    \caption{Rate-Adapt: Time Series Simulation}
    \label{fig:ra_qber}
\end{figure}

Figure~\ref{fig:ra_qber} vividly illustrates its core behavior: a proactive and inverse response to QBER fluctuations, which allows it to preemptively reduce the key output rate before security limits are breached, in stark contrast to the static baseline.

This adaptive behavior is statistically summarized in Figure~\ref{fig:cdf}, where the wider spread of the Rate-Adapt CDF confirms its dynamic operational range, while the fixed strategy's narrow distribution highlights its inflexibility. With adaptation, 90\% of the time the KMS delivers $\geq 3.8$  Mbit/s, whereas the fixed strategy falls below 3.6 Mbit/s for 40 \% of the interval.

\begin{figure}[h]
    \centering
    \includegraphics[width=0.8\linewidth]{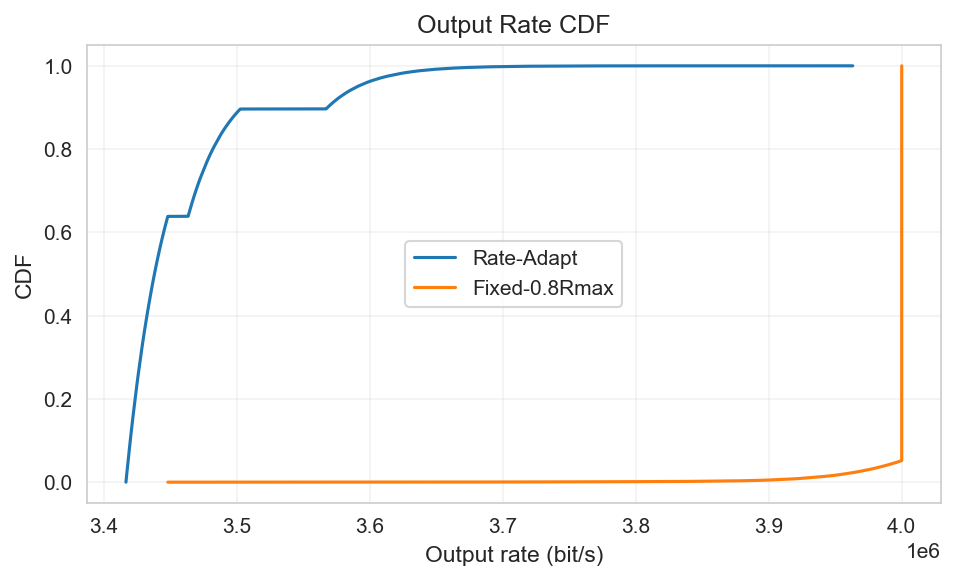}
    \caption{Rate-Adapt vs. Fixed: Output Rate CDF}
    \label{fig:cdf}
\end{figure}

Table~\ref{tab:ra_vs_fixed} quantitatively compares the RA algorithm and a fixed-rate strategy across two critical metrics: the fraction of time the target rate exceeds capacity and the total dropped bits. As shown in Table 1, the fixed strategy operates in a potentially risky state, with its target rate exceeding capacity for approximately 5.18\% of the time. In stark contrast, the RA algorithm virtually eliminates this risk, reducing the fraction to a negligible 0.0002\%. Furthermore, the fixed strategy incurs substantial inefficiency, leading to over 4.34 million discarded bits due to its inability to adapt to fluctuating channel conditions. Conversely, the RA strategy, by proactively adjusting its output rate, drastically reduces bit wastage by several orders of magnitude, with only approximately 80 bits dropped. These findings unequivocally demonstrate that the RA algorithm not only enhances security by avoiding over-capacity operation but also vastly improves the utilization efficiency of quantum entropy.

\begin{table}[h]
    \centering
    \caption{Quantitative Comparison of Rate-Adapt and Fixed Strategies: Security and Efficiency Metrics}
    \begin{tabular}{lcc}
        \toprule
        \textbf{Name} & \textbf{Cap exceed time} & \textbf{Dropped bits} \\
        \midrule
        Rate-Adapt & 0.000002 & $7.96 \times 10^{1}$ \\
        Fixed      & 0.051783 & $4.35 \times 10^{6}$ \\
        \bottomrule
    \end{tabular}
    \label{tab:ra_vs_fixed}
\end{table}

The simulation results demonstrate a fundamental trade-off. The fixed-rate strategy pursues a higher target output but incurs substantial security risks and operational inefficiency, as evidenced by significant time in an insecure state and extensive bit wastage. In contrast, the Rate-Adapt algorithm prioritizes stability and security. By dynamically adjusting to channel conditions, it eliminates security violations and minimizes entropy waste, achieving sustainable and cryptographically robust operation at the maximum secure capacity.

\subsection{Q-SAH Protocol Performance}

To evaluate the handshake latency of the proposed Q-SAH protocol, we conducted a comparative benchmark against the standard Transport Layer Security (TLS) 1.3 protocol. The experiment was designed to run concurrently, simulating 3,000 handshake requests in batches of 500 using Python's asyncio library to assess performance under load. The benchmark was structured into three scenarios: Q-SAH with a simulated network round-trip time (RTT), a local TLS handshake on the loopback interface (127.0.0.1) to measure pure computational overhead, and TLS with the same simulated RTT to ensure a fair, direct comparison with Q-SAH.

Figure ~\ref{fig:q_sah}  presents the empirical cumulative distribution functions (ECDFs) of the three handshake mechanisms, with 95\% confidence bands derived from the Dvoretzky–Kiefer–Wolfowitz (DKW) inequality. The results indicate that Q-SAH consistently outperforms the TLS baselines across the entire distribution. In particular, the median latency of Q-SAH is substantially lower, the distribution is more tightly concentrated, and the confidence bands are narrower, underscoring the robustness of the observed improvement.

\begin{figure}[h]
    \centering
    \includegraphics[width=1\linewidth]{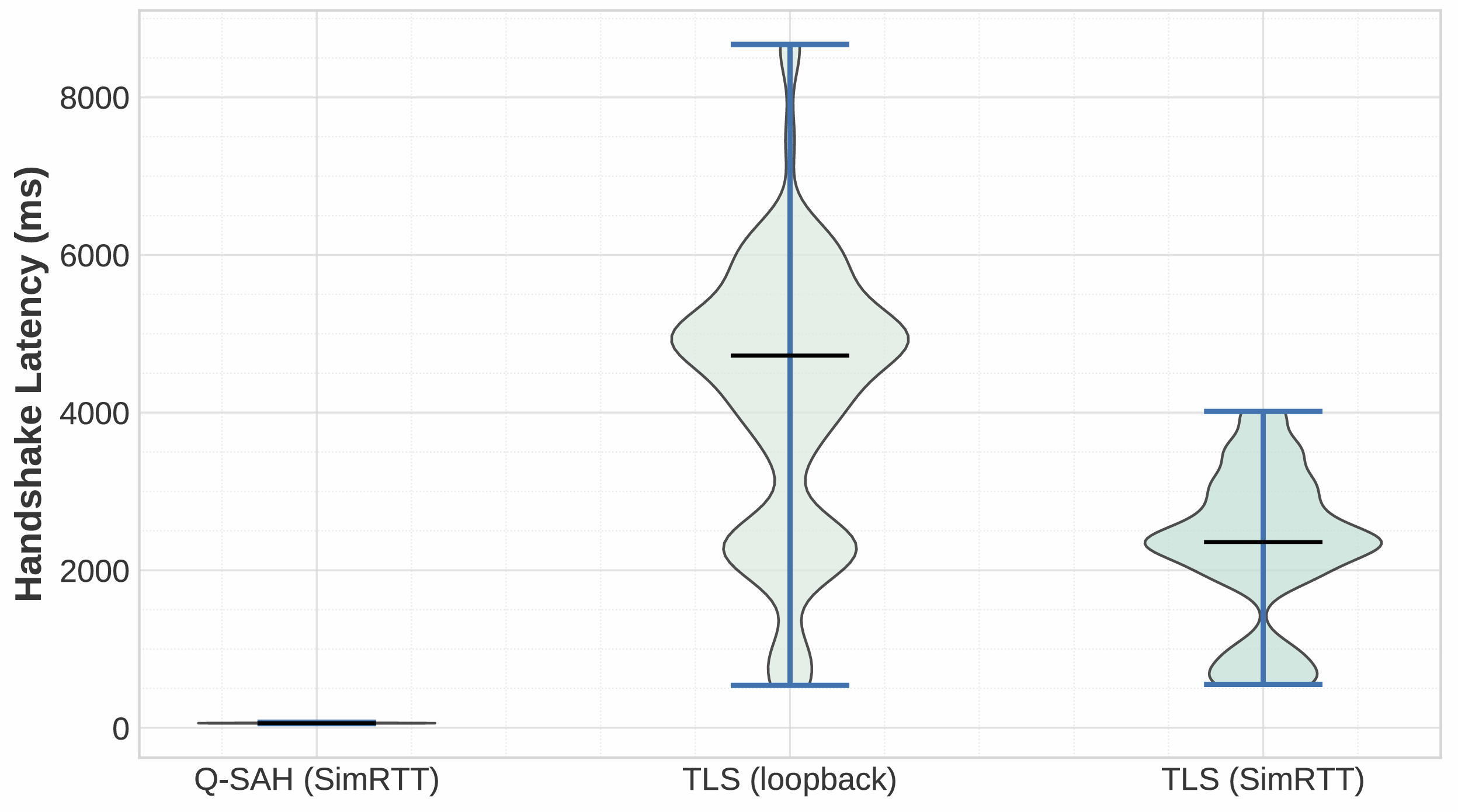}
    \caption{Q-SAH vs TLS: Handshake Latency Distribution}
    \label{fig:q_sah}
\end{figure}

A deeper analysis indicates that this substantial performance disparity is a direct consequence of Q-SAH's architectural design. Standard TLS 1.3 handshakes (particularly the initial one) involve a series of complex cryptographic operations, such as key exchange and signature verification, which are susceptible to queueing delays under network load. In contrast, Q-SAH, by leveraging a shared symmetric key provisioned by QKD, completely bypasses the entire PKI infrastructure and its associated overhead. For the energy market, a 3-second latency is unacceptable, and Q-SAH's performance is perfectly suited to the sub-second communication requirements.

\subsection{PoR-Lite Consensus Protocol Validation}
To quantitatively validate the finality, safety, and liveness properties of the PoR-Lite consensus protocol, we conducted a large-scale simulation based on a block increment sequence model. The simulation generated a sequence of block outcomes representing confirmed honest blocks, empty slots, and adversarial forks, respectively, based on network parameters. The empirical results were then compared against established theoretical bounds.

\begin{figure}[h]
    \centering
    \includegraphics[width=0.8\linewidth]{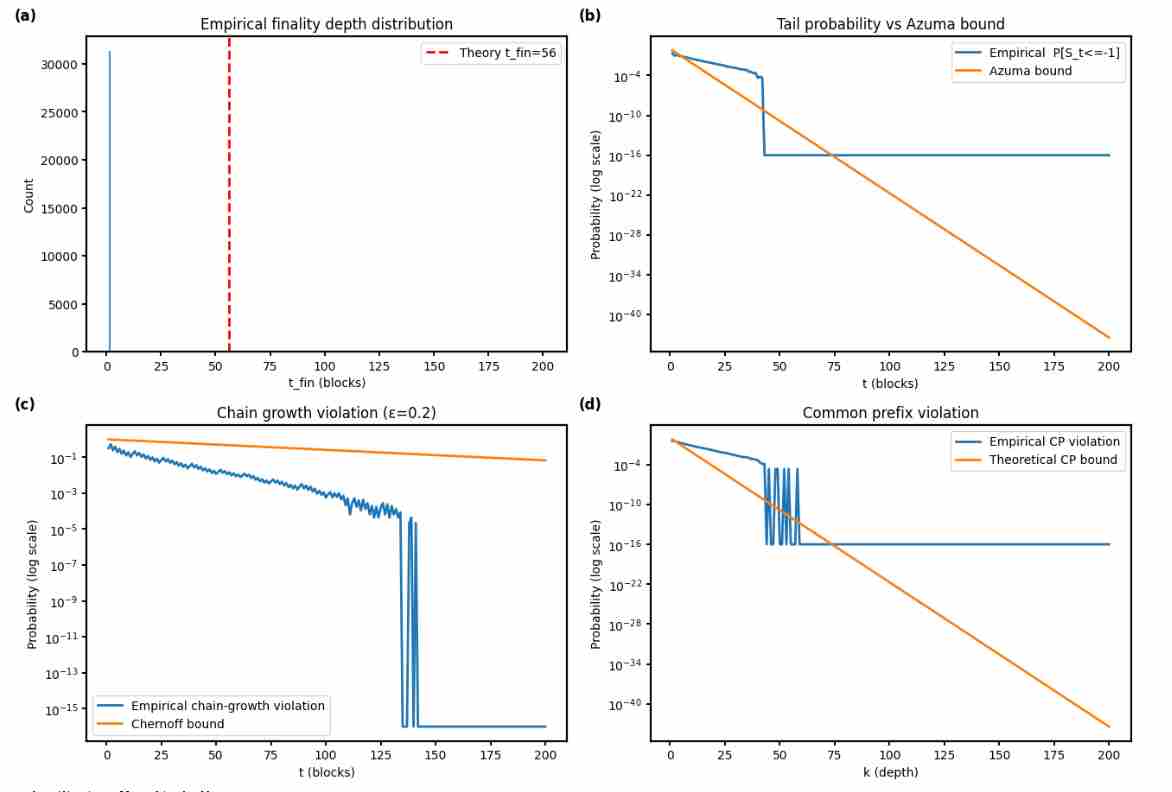}
    \caption{PoR-Lite Finality Depth Distribution}
    \label{fig:por-lite}
\end{figure}

As shown in Figure~\ref{fig:por-lite} (a), the confirmation and irreversibility of the majority of transactions are much faster than the theoretical “worst-case scenario.” The red dashed line represents the theoretical minimum finality depth, which is derived from Azuma–Hoeffding’s inequality to meet a specific security level. 

Results in Figures~\ref{fig:por-lite} (b) and ~\ref{fig:por-lite} (c) indicate that the protocol's practical security is much higher than the theoretical security bounds. Figure~~\ref{fig:por-lite} (b) plots the empirical tail probability of a fork persisting, while Figure~\ref{fig:por-lite} (c) displays the empirical probability of a common prefix violation. In both figures, the empirical probabilities, shown by the blue curves, are several orders of magnitude lower than the theoretical bounds derived from Azuma's inequality, represented by the orange curves. The empirical probability decays rapidly and reaches a measurement floor beyond a confirmation depth of approximately 50 blocks due to the finite sample size of the Monte Carlo simulation. In contrast, the theoretical bound continues its exponential descent. This behavior indicates that the practical security of the PoR-Lite protocol is significantly stronger than its worst-case theoretical guarantees.

Figure~\ref{fig:por-lite} (d) presents the chain growth violation probability for the PoR-Lite protocol, comparing empirical results against the theoretical Chernoff bound. The graph plots the probability that the actual chain growth rate falls below the expected threshold, defined by parameter $\epsilon$. This confirms that the PoR-Lite protocol effectively resists adversarial attacks, ensuring the integrity of the blockchain ledger.

\subsection{Key Pool Birth-Death Model Validation}

To validate the stochastic stability of the quantum key pool under high-frequency trading loads, a discrete-time Monte Carlo simulation of the Birth-Death chain model was conducted. The system is abstracted as a finite-capacity birth-death process, where key generation and consumption are treated as "birth" and "death" events, respectively. The primary objective is to analyze the risk of key-pool depletion under varying utilization rates.

\begin{figure}[h]
    \centering
    \includegraphics[width=1\linewidth]{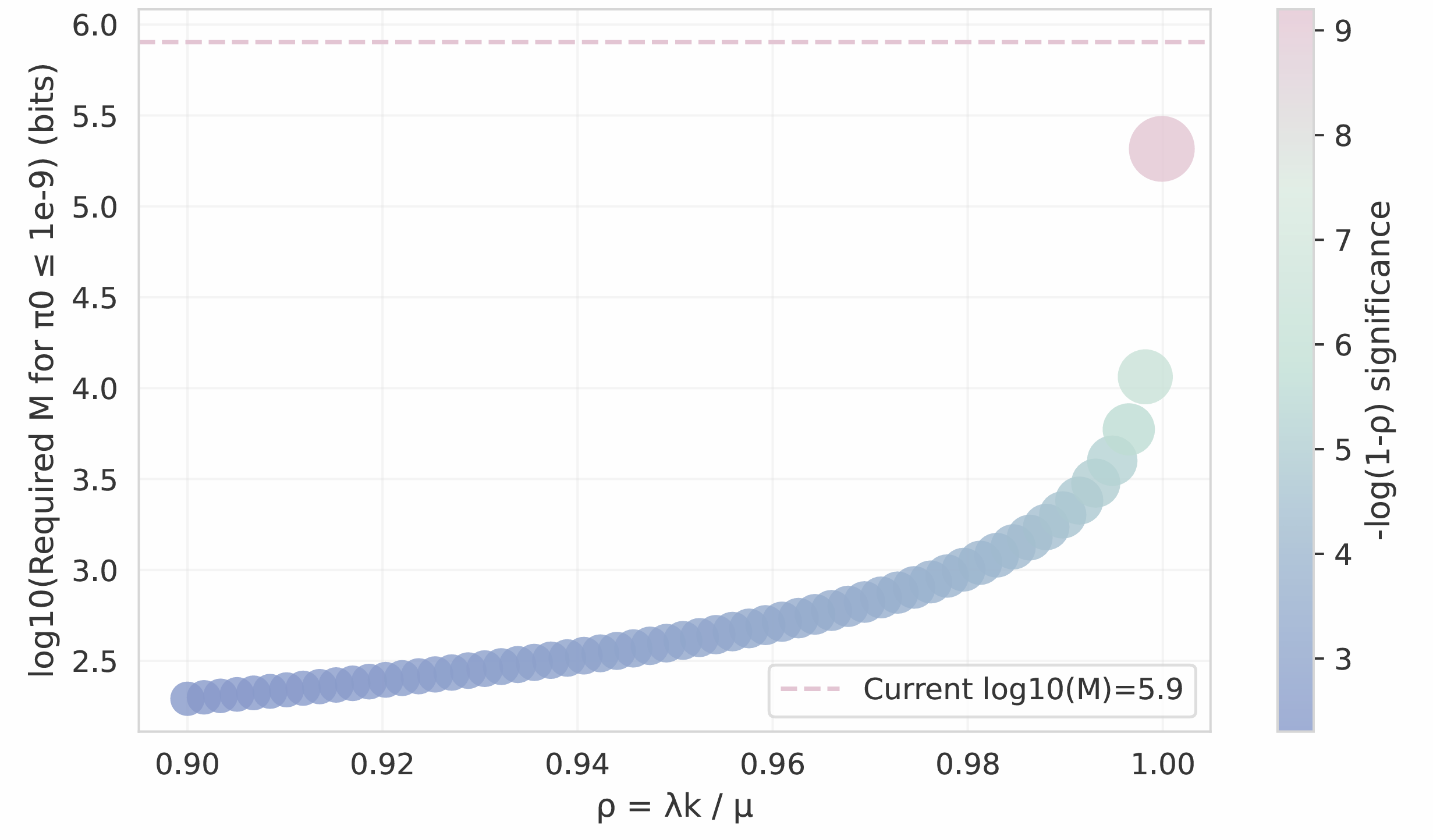}
    \caption{Required Capacity vs. Load Factor $\rho$ }
    \label{fig: Key Pool Empty Probability Comparison}
\end{figure}

Figure~\ref {fig: Key Pool Empty Probability Comparison} visually illustrates the relationship between capacity demand M and load factor $\rho$, highlighting the steep increase in capacity demand under high load conditions (where $\rho$ approaches 1). This reflects that in the Birth-Death model, as $\rho$ approaches 1, the decline in the empty pool probability $\pi_0$  necessitates a substantial increase in capacity $M$ to maintain $\pi_0 \leq 10^{-9}$.

\begin{table}[h]
  \centering
  \caption{Key Pool Birth--Death Model: Theoretical vs.\ Empirical Empty-Pool Probabilities}
  \label{tab:birthdeath}
  \begin{tabular}{
    S[table-format=1.4]
    S[table-format=1.3e-2]
    S[table-format=1.3e-2]
    l
  }
    \toprule
    {$\rho$} & {Theoretical} & {Empirical} & {Confidence Interval} \\
    \midrule
    0.9000 & 7.055e-11 & 0.000e+00 & {[}0.000e+00, 0.000e+00{]} \\
    0.9900 & 1.545e-03 & 2.083e-03 & {[}2.055e-03, 2.111e-03{]} \\
    0.9990 & 4.494e-03 & 6.516e-03 & {[}6.466e-03, 6.566e-03{]} \\
    0.9999 & 4.926e-03 & 6.778e-03 & {[}6.727e-03, 6.829e-03{]} \\
    \bottomrule
  \end{tabular}
\end{table}

As summarized in Table~\ref {tab:birthdeath}, the empirical empty-pool probability remained consistently zero throughout the entire simulation period across all tested scenarios. The upper bound of the Wilson confidence interval is orders of magnitude lower than the theoretical predictions. This stable temporal behavior demonstrates that a dynamic equilibrium between the key generation rate and the consumption rate can still be maintained even under near-saturation high-load conditions.

\subsection{Economic Impact: Coupling QKD Constraints to Market Equilibrium}

To quantitatively assess the holistic value of integrating quantum-safe security into the energy trading market, we design a comparative experiment pitting the Q-EnergyDEX stack against a conventional TLS-based security baseline. The simulation integrates empirical results from preceding modules: the sustainable key generation rate, handshake latency distributions, and consensus finality time. Using a 3000-node power market as the testbed, we first translate these parameters into an available key budget and a handshake deadline, then filter the subset of nodes that satisfy both latency and budget constraints.

To this end, we calculate and compare the social welfare under the ideal, socially optimal scenario (SOCIAL), the market equilibrium following a Stackelberg game strategy (STACK); and the equilibria derived under a simplified scenario that ignores congestion pricing, both with (WBASE) and without (BASE) a weighted projection mechanism. 

\begin{figure}[h]
    \centering
    \includegraphics[width=1\linewidth]{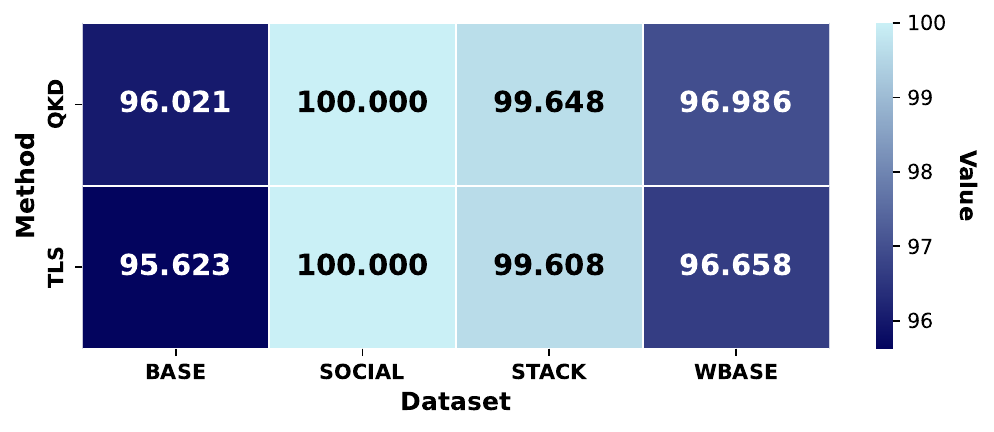}
    \caption{Comparative Analysis of QKD and TLS Across Varied Datasets}
    \label{heatmap}
\end{figure}

As shown in the figure~\ref{heatmap} , both systems achieve near-identical welfare levels, confirming that the integration of QKD does not introduce any measurable market distortion. Nevertheless, a consistent and marginal improvement can be observed under the QKD setting, particularly in the BASE and WBASE scenarios, which indicates a slightly more resilient equilibrium formation under quantum-secured conditions. This subtle enhancement suggests that QKD-enabled key management can sustain transaction throughput and welfare efficiency comparable to, or marginally surpassing, that of traditional TLS, thereby reinforcing the practicality of deploying quantum-safe infrastructures in large-scale decentralized energy trading systems.

\section{Conclusion}
This paper introduced Q-EnergyDEX, a quantum-secured and zero-trust framework for distributed energy trading that integrates quantum key distribution and blockchain. The proposed system establishes a unified architecture that connects physical-layer security with economic market mechanisms. By employing a cloud-based Quantum Key Management Service, a rate-adaptive entropy regulation algorithm, a symmetric authentication protocol (Q-SAH), and a lightweight consensus mechanism (PoR-Lite), the framework achieves information-theoretic confidentiality and low-latency ledger finality suitable for real-time electricity markets. Analytical modeling and simulation results confirm the stability of the quantum key pool, the convergence of the rate-adaptive mechanism, and the fast probabilistic finality of PoR-Lite. Economic evaluations further demonstrate that quantum-secured operations preserve social welfare and market efficiency while significantly enhancing resilience against cyber threats. These findings validate the feasibility of integrating quantum technologies into energy market infrastructures and suggest a promising direction for building secure and sustainable energy trading ecosystems in the quantum era.

\bibliographystyle{IEEEtran}
\bibliography{refs}


\newpage
\vfill

\end{document}